\documentclass{article}
\usepackage[keeplastbox]{flushend}  % option suggested by: https://tex.stackexchange.com/questions/205342/issues-with-flushend-last-citation-cut-etc
\usepackage{spconf}
\usepackage{blindtext, graphicx}
\usepackage{bm}
\usepackage{booktabs}
\usepackage{subfig}
\usepackage[font=footnotesize]{caption}
\usepackage{hyperref}
\usepackage[warn]{textcomp}
\usepackage{siunitx}

\usepackage{algorithmic}
\usepackage[lined,ruled,noalgohanging,noend]{algorithm2e}

\usepackage{localmath}

\usepackage{nicefrac}
\newcommand{\half}{\nicefrac{1}{2}}

\usepackage{textcomp}

\makeatletter
\newcommand{\pushright}[1]{\ifmeasuring@#1\else\omit\hfill$\displaystyle#1$\fi\ignorespaces}
\newcommand{\pushleft}[1]{\ifmeasuring@#1\else\omit$\displaystyle#1$\hfill\fi\ignorespaces}
\makeatother

% todo notes
\usepackage[colorinlistoftodos,prependcaption,textsize=tiny]{todonotes}

%%%%%%%%%%% DANGER %%%%%%%%%%%%%%%%%
%\addtolength{\columnsep}{-0.1mm}
%\addtolength{\textfloatsep}{-0.5mm}
%%%%%%%%%%%%%%%%%%%%%%%%%%%%%%%%%%%%

% correct bad hyphenation here
\hyphenation{op-tical net-works semi-conduc-tor}

% measure text and linewidth
%\usepackage{layouts}

\begin{document}
\ninept
\title{Multi-modal Blind Source Separation with Microphones and Blinkies}

% Single address.
% ---------------
\name{Robin Scheibler and Nobutaka Ono\thanks{This work was supported by a JSPS post-doctoral fellowship and grant-in-aid (\textnumero 17F17049), and the SECOM Science and Technology Foundation.}\thanks{The research presented in this paper is reproducible. Code and data are available at \protect\url{https://github.com/onolab-tmu/blinky-iva}.}%
\thanks{\textcopyright~2019 IEEE. Personal use of this material is permitted. Permission from IEEE must be obtained for all other uses, in any current or future media, including reprinting/republishing this material for advertising or promotional purposes,creating new collective works, for resale or redistribution to servers or lists, or reuse of any copyrighted component of this work in other works.}%
}
% \protect\url{http://www.robinscheibler.org/data/icassp2019.zip}
\address{Tokyo Metropolitan University, Tokyo, Japan}
%
% For example:
% ------------
%\address{School\\
%	Department\\
%	Address}
%
% Two addresses (uncomment and modify for two-address case).
% ----------------------------------------------------------
% \twoauthors
%  {A. Robin Scheibler, B. Author-two\sthanks{Thanks to XYZ agency for funding.}}
% 	{School A-B\\
% 	Department A-B\\
% 	Address A-B}
%  {C. Author-three, D. Author-four\sthanks{The fourth author performed the work
% 	while at ...}}
% 	{School C-D\\
% 	Department C-D\\
% 	Address C-D}

\maketitle

\begin{abstract}
  % 146 words
  We propose a blind source separation algorithm that jointly exploits measurements by a conventional microphone array and an ad hoc array of low-rate sound power sensors called \textit{blinkies}.
  While providing less information than microphones, blinkies circumvent some difficulties of microphone arrays in terms of manufacturing, synchronization, and deployment.
  The algorithm is derived from a joint probabilistic model of the microphone and sound power measurements.
  We assume the separated sources to follow a time-varying spherical Gaussian distribution, and the non-negative power measurement space-time matrix to have a low-rank structure.
  We show that alternating updates similar to those of independent vector analysis and Itakura-Saito non-negative matrix factorization decrease the negative log-likelihood of the joint distribution.
  The proposed algorithm is validated via numerical experiments.
  Its median separation performance is found to be up to 8 dB more than that of independent vector analysis, with significantly reduced variability.
\end{abstract}
\begin{keywords}%
Blind source separation, multi-modal, sound power sensors, independent vector analysis, non-negative matrix factorization.
\end{keywords}

% To display textwidth and linewidth, uncomment package layouts 
% just above begin{document} and the two following lines
%textwidth: \printinunitsof{in}\prntlen{\textwidth}
%linewidth: \printinunitsof{in}\prntlen{\linewidth}

\section{Introduction}

Blind source separation (BSS) conveniently allows to separate a mixture of sources without any prior knowledge about sources or microphones \cite{Comon:1512057}.
For example, independent component \cite{Comon:1994kr} and vector \cite{Kim:2006ex} analysis (ICA and IVA, respectively) reliably separate sources in the determined case, that is when there are as many microphones as sources.
The latter in particular cleverly avoids the frequency permutation ambiguity and can be solved with an efficient algorithm based on majorization-minimization~\cite{Ono:2011tn,Ono:2012wa}.
However, the recent drop in the cost of microphones and availability of plenty of processing power means that we are often in a situation where more microphones than sources are available. 
While more microphones should in principle lead to superior performance, algorithms designed for the determined case, such as IVA, may fail.
A typical problem is for a single source to have different frequency bands classified as different sources.

In this work, we explore the scenario where two modalities of sound, instantaneous pressure and short-time power, are collected with a compact microphone array and low-rate sound power sensors, respectively.
We assume that these sensors can be easily distributed in an ad hoc fashion in the area surrounding the target sound sources.
A practical example of such devices are \textit{blinkies}~\cite{scheibler2018apsipa_blinkies}.
These low-power battery operated sensors use a microphone to measure sound power which is used to modulate the brightness of an on-board light-emitting device (LED).
A conventional video camera is then used to synchronously harvest the measurements from  all blinkies.
This system is illustrated in \ffref{blinkies} along with an actual blinky device.
While the method presented hereafter is applicable to any device collecting sound power (e.g., distributed microphones, smartphones, etc), we will only refer to these sensors as blinkies for convenience in the rest of the paper.

% blinky related previous work
Previous work has shown that a single blinky providing voice activity detection (VAD) of a single source can be used to create a powerful beamformer~\cite{scheibler2018apsipa_blinkies}.
This technique can leverage an arbitrary number of microphones, whose locations need not be known, and results in large improvements in source quality.
However, when several sound sources are present, only the power of their mixture can be measured, and the VAD becomes difficult to perform.
Moreover, errors in the VAD directly result in target source cancellation.
In this situation, non-negative matrix factorization (NMF) of the space-time sound power matrix has been proposed as a way of separating sources in the power domain \cite{horiike2018asj_fall}.
Such space-time NMF has also been suggested for noise suppression in asynchronous microphone arrays~\cite{Matsui:bg}.
Nevertheless, it remains an issue to find an appropriate threshold for the VAD following the NMF.

% IVA related previous work
Instead of this two-step process, we propose to perform the source separation and the sound power NMF jointly.
Our approach builds on prior work showing that IVA benefits from side-information about the source activations, for example via user guidance~\cite{Ono:2012bh} or pilot signals~\cite{Nesta:2017kn}.
As an example, the independent low-rank matrix analysis (ILRMA) framework successfully puts this principle to work and unifies IVA and NMF~\cite{Kitamura:2016vj}.
Whereas ILRMA applied a low-rank non-negative model on the separated source spectra, we propose instead to use the low-rank of the space-time sound power matrix as a proxy to the source activations.
The activations of the latent variables of the NMF model are assumed to be the variance of the separated source signals, effectively coupling together the IVA and NMF objectives.
This intuition is formalized as a joint probabilistic model of the sources and blinky signals, and we derive efficient updates to minimize its negative log-likelihood.

The performance of the algorithm is evaluated in numerical experiments and compared to that of AuxIVA~\cite{Ono:2011tn}.
The experiment results show that including the joint separation leads to improved performance in all tested cases.
Not only are the median SDR and SIR improved by up to 4 and 8 decibels (dB), respectively, but their variability is also significantly reduced, indicating stable performance.
In addition, we confirm that the use of extra microphones leads to steady improvement in performance, even for a weak source.

The rest of the paper is organized as follows.
In \sref{model}, we formulate the joint probabilistic model for the microphone and power sensor data.
An efficient algorithm for estimating the parameters of this model is described in \sref{algorithm}.
Results of numerical experiments validating the performance of the proposed method are given in \sref{numexp}.
\sref{conclusion} concludes this paper.

\begin{figure}
  \includegraphics[width=\linewidth]{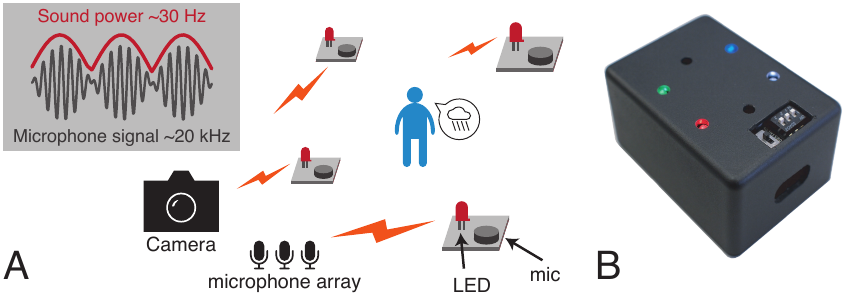}
  \caption{A) Example of a scenario with microphones and blinkies to cover a target source. B) Picture of an actual blinky sensor.}
  \flabel{blinkies}
\end{figure}

\section{Joint Model}
\seclabel{model}

We suppose there are $K$ target sources captured by $M$ microphones and $B$ sound power sensors.
In the short-time Fourier transform (STFT) domain, the microphone signals can be written as a weighted sum of the source signals 
\begin{equation}
  x_m[f,n] = \sum_{k=1}^K a_{mk}[f]\; y_k[f,n] + z_m[f,n]
\end{equation}
where $f=1,\ldots,F$ and $n=1,\ldots,N$ are the frequency and time indices, respectively.
The complex weight $a_{mk}[f]$ is the room transfer function from source $k$ to microphone $m$, and $z_m[f,n]$ collects the noise and model mismatch.
The $b$-th blinky signal at time $n$ is the sum of the sound power over frequencies at its location
\begin{equation}
  u_{bn} = \sum_{f=1}^F \left| \sum_{k=1}^K a_{bk}[f] \; y_k[f,n] + z_b[f,n] \right|^2.
\end{equation}
In addition, throughout the manuscript we use bold upper and lower case for matrices and vectors, respectively.
The Euclidean norm of a complex vector $\vx$ is denoted $\|\vx\| = (\vx^\H \vx)^{\half}$.

Our goal is to find the $M\times M$ demixing matrix $\mW_f$ such that the source signals are recovered linearly from the microphone measurements
\begin{equation}
  \vy_{fn} = \mW_f \vx_{fn}
\end{equation}
where
\begin{align}
  \vy_{fn} & = \left[y_1[f,n],\ldots,y_M[f,n]\right]^\top, \\
  \vx_{fn} & = \left[x_1[f,n],\ldots,x_M[f,n]\right]^\top, \\
  \mW_f & = [\vw_1\; \cdots\; \vw_M]^\H.
\end{align}
We will also overload notation in a natural way to represent the signal vector of source $k$ over frequencies
\begin{equation}
  \vy_{kn} = \left[y_k[1,n],\ldots,y_k[F,n]\right]^\top.
\end{equation}

We now establish the joint probabilistic model underpinning the algorithm we propose in \sref{algorithm}.
It is based on the three following assumptions.
\begin{enumerate}
  \item The separated signals $\vy_{kn}$ are statistically independent.
  \item The separated signals spectra are circularly-symmetric complex Normal random vectors with distribution
    \begin{equation}
      p_y(\vy_{kn}) = \frac{1}{\pi^F r_{kn}^F} \exp\left(- \frac{\| \vy_{kn} \|^2}{r_{kn}}\right), \quad k=1,\ldots,M,
    \end{equation}
    and time-varying variance $r_{kn}$. Taken over all time frames, this distribution is in fact super-Gaussian
    and has been successfully used for source separation~\cite{Ono:2012bh}.
  \item The power measurements $u_{bn}$ are the squared norms of circularly-symmetric complex Normal random vectors with covariance matrix $(\sum_{k=1}^K g_{bk} r_{kn}) \mI_F$, where $g_{bk}$ is a parameter of the power mix.
    Thus, the $B\times N$ non-negative matrix of the variances has rank $K$.
    The probability distribution function of the norm can be derived from the $\chi^2$ distribution with $2F$ degrees of freedom
    \begin{equation}
      p_u(u_{bn}) = \frac{1}{2^F \Gamma(F)} \frac{u_{bn}^{F-1}}{\sigma^{2F}}
      \exp\left( - \frac{u_{bn}}{2 \sigma^2} \right),
    \end{equation}
    with $\sigma^2 = \sum_{k=1}^K g_{bk} r_{kn}$.
    While this might seem like a deviation from usual Gaussian models, the same estimator of the variance is in fact obtained.
\end{enumerate}
Note that in the above we have maintained a distinction between the number of target sources $K$ and the number of microphones $M$.
For ICA and IVA, the determined case, i.e., $M=K$, needs to be assumed, and, because $\mW_f$ is an $M\times M$ demixing matrix, we will obtain $M$ demixed signals.
However, only $K$ out of $M$ sources are tied to the blinky signals via a low-rank non-negative variance model.
Intuitively, we are asking that the variances of these $K$ sources be well aligned with the activations from the non-negative decomposition.

Putting the pieces together, the likelihood of the observation is
\begin{equation}
  \calL = \prod_{f} |\det \mW_f|^{2N} \prod_{kn}p_y(\vy_{kn}) \prod_{bn} p_u(u_{bn}),
\end{equation}
with free parameters $\{\mW_f\}$, $\{g_{bk}\}$, $\{r_{kn}\}$.
The following section will describe how to estimate them  by minimizing the negative logarithm of this function.

\section{Algorithm}
\seclabel{algorithm}

In this section, we derive an algorithm to minimize the negative log-likelihood of the observed data.
The cost function derived can be written as the sum of those of IVA and NMF,
\begin{multline}
  J =
  -2 N \sum_f \log |\det \mW_f| + \sum_{n=1}^N\sum_{k=1}^M \left( \frac{\|\vy_{kn}\|^2}{r_{kn}}  + F \log r_{kn} \right) \\
  + \sum_{n=1}^N \sum_{b=1}^B \left( F \log \sum_{k=1}^K g_{bk} r_{kn} + \frac{u_{bn}}{2 \sum_{k=1}^K g_{bk} r_{kn}} \right) + C,
\elabel{cost_function}
\end{multline}
where $C$ includes all constant terms.
It is convenient to group the parameters to estimate in matrices.
We represent the gains by the matrix $\mG\in\R_+^{B\times K}$ with $(\mG)_{bn} = g_{bn}$ and the sources variance matrix by $\mR\in\R_+^{M\times N}$ with $(\mR)_{kn} = r_{kn}$.
Finally, we define $\mU \in \R_+^{B\times N}$ and $\mP \in \R_+^{K\times N}$ with $(\mU)_{bn} = u_{bn}$ and $(\mP)_{kn} = \| \vy_{kn} \|^2$, respectively.

The update rules for the demixing matrix $\mW_f$ are obtained from the iterative projection technique proposed for IVA~\cite{Ono:2011tn,Ono:2012bh}
\begin{align}
  \mV_{fk} & = \frac{1}{N} \sum_{n} \frac{1}{2 r_{kn}} \vx_{fn} \vx^\H_{fn}, \\
  \vw_{fk} & \gets (\mW_f \mV_{fk})^{-1} \ve_k, \\
  \vw_{fk} & \gets \vw_{fk} (\vw_{fk}^\H \mV_{fk} \vw_{fk})^{-\frac{1}{2}},
\end{align}
where $\vw_{fk}$ is the $k$-th row of $\mW_f$ and $\ve_k$ is the $k$-th canonical basis vector. These updates are done for $k=1$ to $M$.

\begin{figure}
  \centering
  \includegraphics[width=0.87\linewidth]{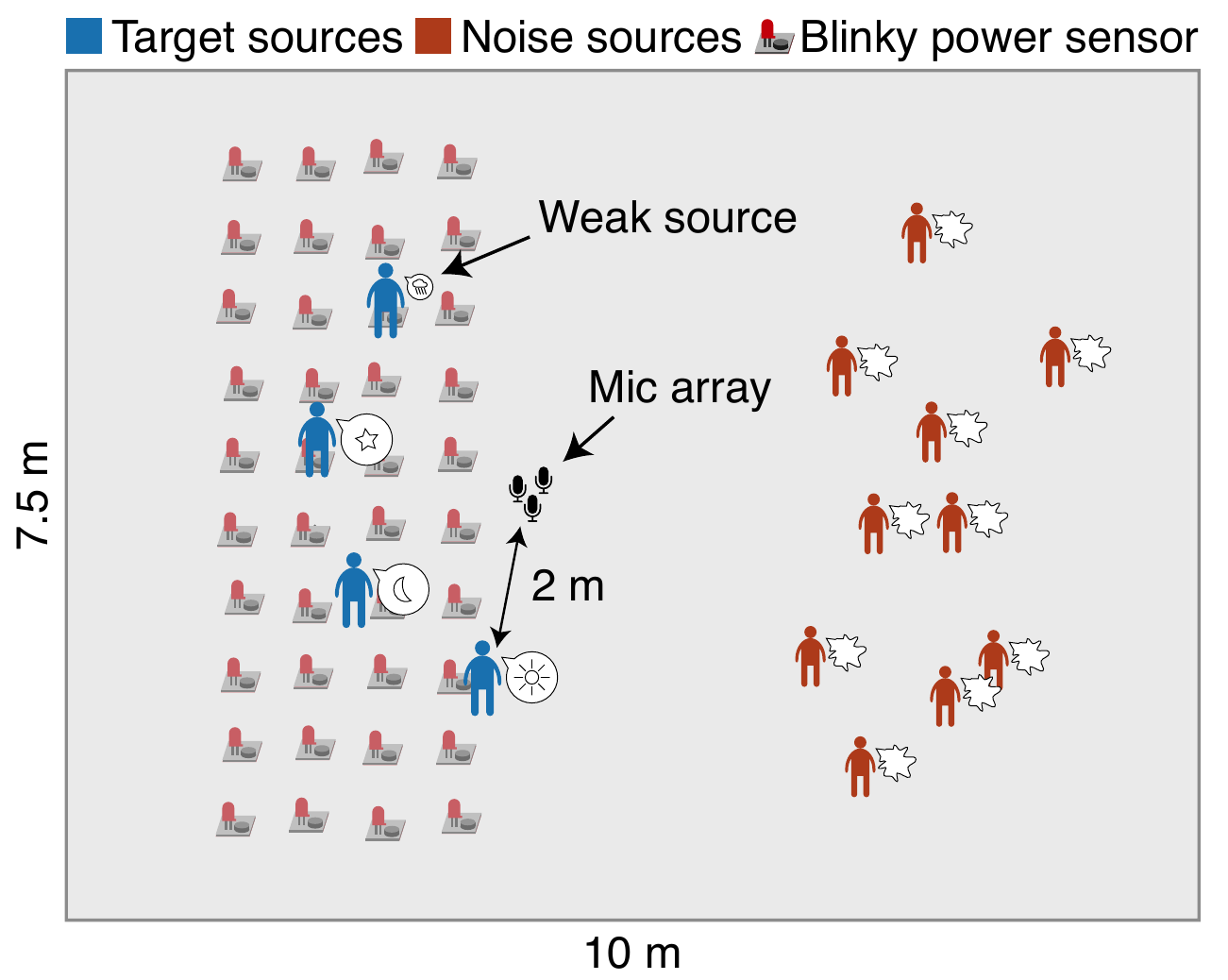}
  \caption{Illustration of the room geometry and locations of sources and sensors in the numerical experiments.}
  \flabel{experiment_setup}
\end{figure}

The update rules for $\mG$ and $\mR$ are very similar to those of IS-NMF and are given by the following proposition.
\begin{proposition}
  The following updates of $\mG$ and $\mR$ decrease monotonically the value of the cost function $J$ from \eref{cost_function}
  \begin{align}
    \nonumber
    \mG & \gets \mG \left( \frac{\left(\frac{1}{2F}\mU \odot (\mG\mR_K)^{.-2}\right) \mR_K^\top}{(\mG\mR_K)^{.-1} \mR_K^\top} \right)^{.\frac{1}{2}}, \\
    \nonumber
    \mR_K & \gets \mR_K \left( \frac{\frac{1}{F}\mP_K \odot \mR_K^{.-2} + \mG^\top \left(\frac{1}{2F}\mU \odot (\mG\mR_K)^{.-2}\right)}{\mR_K^{.-1} + \mG^\top (\mG\mR_K)^{.-1}} \right)^{.\frac{1}{2}}.
  \end{align}
  The dotted exponents, divisions, and $\odot$ are element-wise power, division, and multiplication operations, respectively.
  The matrices $\mR_K$ and $\mP_K$ contain the $K$ top rows of $\mR$ and $\mP$, respectively.
\end{proposition}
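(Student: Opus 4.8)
The plan is to prove both updates by majorization--minimization (MM), treating the $\mG$ step (with $\mR_K$ held fixed) and the $\mR_K$ step (with $\mG$ held fixed) as two separate block updates; since the composition of two monotone steps is monotone, it suffices to show each decreases $J$. For each block I would construct an auxiliary function $Q(\,\cdot\mid\tilde\cdot\,)$ that upper-bounds the relevant part of $J$ everywhere and coincides with it at the current iterate $\tilde{\cdot}$, so that the exact minimizer $\theta^\star$ of $Q$ obeys $J(\theta^\star)\le Q(\theta^\star)\le Q(\tilde\theta)=J(\tilde\theta)$. Since $-2N\sum_f\log|\det\mW_f|$ does not depend on $\mG$ or $\mR$, it is constant and can be dropped. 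Writing $\hat u_{bn}=\sum_{k=1}^K g_{bk}r_{kn}$ for the model power and recalling $(\mP)_{kn}=\|\vy_{kn}\|^2$, the only active terms are $\sum_{kn}\bigl(\|\vy_{kn}\|^2/r_{kn}+F\log r_{kn}\bigr)$ and $\sum_{bn}\bigl(F\log\hat u_{bn}+u_{bn}/(2\hat u_{bn})\bigr)$.

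Next I would assemble the majorizer from two standard ingredients. First, each concave logarithm is bounded above by its tangent, $\log x\le\log\tilde x+(x-\tilde x)/\tilde x$, which linearizes $F\log r_{kn}$ and $F\log\hat u_{bn}$ and supplies the terms \emph{linear} in the variable (coefficients $F/\tilde r_{kn}$ and $Fg_{bk}/\tilde u_{bn}$). Second, the convex reciprocal $1/\hat u_{bn}=1/\sum_k g_{bk}r_{kn}$ is majorized by Jensen's inequality with the tangent weights $\tilde\lambda_{bkn}=\tilde g_{bk}\tilde r_{kn}/\tilde u_{bn}$, giving $1/\hat u_{bn}\le\sum_k\tilde\lambda_{bkn}^2/(g_{bk}r_{kn})$, which is tight at the current point and supplies the terms \emph{inversely proportional} to the variable. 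The already-simple term $\|\vy_{kn}\|^2/r_{kn}$ needs no majorization (identity bound). The delicate point is that $r_{kn}$ appears in both objective parts, so the $\mR_K$ surrogate must gather contributions from the IVA Gaussian term \emph{and} the blinky NMF term at once.

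Once majorized, $Q$ separates into independent scalar problems, one per matrix entry, each of the form $a\theta+b/\theta+\mathrm{const}$ with $a,b>0$, whose unique minimizer is $\theta^\star=\sqrt{b/a}$. For $g_{bk}$ the linear parts give $a=\sum_n Fr_{kn}/\tilde u_{bn}$ and the reciprocal parts give $b=\sum_n u_{bn}\tilde g_{bk}^2 r_{kn}/(2\tilde u_{bn}^2)$; factoring $\tilde g_{bk}$ out of $\sqrt{b/a}$ reproduces the stated $\mG$ update. For $r_{kn}$ one obtains $a=F/\tilde r_{kn}+\sum_b Fg_{bk}/\tilde u_{bn}$ and $b=\|\vy_{kn}\|^2+\sum_b u_{bn}g_{bk}\tilde r_{kn}^2/(2\tilde u_{bn}^2)$, and factoring $\tilde r_{kn}$ out of $\sqrt{b/a}$ collapses to the stated $\mR_K$ update. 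I expect the main obstacle to be precisely the Jensen step: choosing weights that are simultaneously tight at $\tilde{\cdot}$ and render the bound entrywise separable, and then verifying that the two coupled contributions to $r_{kn}$ recombine cleanly into a single $a\theta+b/\theta$ whose square-root minimizer matches the claimed multiplicative rule; the monotone decrease then follows immediately from the MM inequality.
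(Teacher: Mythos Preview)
Your argument is correct and, at its core, coincides with the paper's: both establish monotone descent by majorization--minimization of an Itakura--Saito--type cost, using the tangent inequality for the concave $\log$ terms and Jensen's inequality for the convex reciprocal to obtain a separable surrogate of the form $a\theta+b/\theta$ with minimizer $\theta^\star=\sqrt{b/a}$. The only difference is one of packaging: the paper first observes that the relevant part of $J$ equals a single IS divergence $\calD_{\mathsf{IS}}(\wt{\mU}\,|\,\wt{\mG}\mR_K)$ for the stacked matrices $\wt{\mU}=\tfrac{1}{F}[\tfrac{1}{2}\mU^\top\ \mP_K^\top]^\top$ and $\wt{\mG}=[\mG^\top\ \mI_K]^\top$, and then invokes the standard $\beta$-NMF multiplicative updates once; you instead keep the two contributions (the IVA Gaussian term $\|\vy_{kn}\|^2/r_{kn}+F\log r_{kn}$ and the blinky term $F\log\hat u_{bn}+u_{bn}/(2\hat u_{bn})$) separate and majorize them directly, combining the resulting $a$'s and $b$'s for $r_{kn}$. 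The paper's reduction is more concise and makes the connection to off-the-shelf IS-NMF explicit, while your derivation is self-contained and exposes exactly where each term in the numerator and denominator of the multiplicative updates originates.
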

\begin{proof}
  Minimizing $J$ with respect to $\mG$ and $\mR_K$ is equivalent to minimizing the Itakura-Saito divergence $\calD_{\mathsf{IS}}(\wt{\mU}\;|\;\wt{\mG} \mR_K)$ with
  \begin{equation}
    \wt{\mU} = \frac{1}{F} \left[\; {\scriptstyle \frac{1}{2}} \mU^\top \  \mP_K^\top \;\right]^\top, \quad \wt{\mG} = \left[\; \mG^\top \  \mI_K \;\right]^\top.
  \end{equation}
  Then the above update rules are obtained by standard majorization-minimization of the $\beta$-divergence~\cite{MasahiroNakano:ey,Anonymous:S6ILS0MD}.
\end{proof}

When $K < M$, there are $M-K$ sources that are not coupled to the NMF part of the cost function.
The variance estimates of these sources is obtained by equating the gradient of \eref{cost_function} to zero, resulting in the following update
\begin{equation}
  r_{kn} = \frac{1}{F}\| \vy_{kn} \|^2, \quad k=K+1,\ldots,M,\ \forall\; n.
\end{equation}

There is an inherent scale indeterminacy between $\mW_f$, $\mR$, and $\mG$.
It is fixed by performing a normalization step after each iteration
\begin{equation}
  \begin{array}{l@{\;}l@{\quad}l@{\;}l}
    \mR & \gets \mLambda_M^{-1} \mR, & \mG & \gets \mG \mLambda_K, \\
    \mW_f & \gets \mLambda_M^{-\frac{1}{2}} \mW_f, & \mP & \gets \mLambda_M^{-1} \mP,
  \end{array}
  \elabel{rescaling}
\end{equation}
where $\mLambda_K = \frac{1}{N} \diag(\mR_K \mathds{1})$ is a diagonal matrix containing the average row values of $\mR$ up to row $K$ ($\mathds{1}$ is the all one vector).
These rescaling do not change the value of the cost function.
The full algorithm is summarized in \algref{blinkiva}.

\begin{algorithm}[t]
\SetKwInOut{Input}{Input}\SetKwInOut{Output}{Output}
\Input{Microphones $\{ \vx_{fn} \}$ and blinky signals $\{u_{bn}\}$}
\Output{Separated signals}
\DontPrintSemicolon
\For{loop $\leftarrow 1$ \KwTo $\text{max. iterations}$}{
  \nl \# Run a few iterations of NMF at once\;
  \For{loop $\leftarrow 1$ \KwTo $\text{nmf sub-iterations}$}{
    $\mR_K \gets \mR_K \left( \frac{\mP_K \odot \mR_K^{.-2} + \mG^\top \left({\scriptscriptstyle \frac{1}{2}}\mU \odot (\mG\mR_K)^{.-2}\right)}{F (\mR_K^{.-1} + \mG^\top (\mG\mR_K)^{.-1})} \right)^{.\frac{1}{2}}$\;
    $\mG \gets \mG \left( \frac{\left(\frac{1}{2F}\mU \odot (\mG\mR_K)^{.-2}\right) \mR_K^\top}{(\mG\mR_K)^{.-1} \mR_K^\top} \right)^{.\frac{1}{2}}$\;
  }
  \For{$k \leftarrow K+1$ \KwTo $M$}{
    \For{$n \leftarrow 1$ \KwTo $N$}{
      $r_{kn} \leftarrow \frac{1}{F} \| \vy_{kn} \|^2$\;
    }
  }
  \nl \# Update the demixing matrices\;
  \For{$k \leftarrow 1$ \KwTo $M$}{
    \For{$f \leftarrow 1$ \KwTo $F$}{
      $\mV_{fk} = \frac{1}{N} \sum_{n} \frac{1}{2 \max\{\epsilon, r_{kn}\}} \vx_{fn} \vx^\H_{fn}$\;
      $\vw_{fk} \gets (\mW_f \mV_{fk})^{-1} \ve_k$\;
      $\vw_{fk} \gets \vw_{fk} (\vw_{fk}^\H \mV_{fk} \vw_{fk})^{-\frac{1}{2}}$\;
    }
  }
  \nl \# Demix the signal\;
  \For{$f \leftarrow 1$ \KwTo $F$}{
    \For{$n \leftarrow 1$ \KwTo $N$}{
      $\vy_{fn} = \mW_f \vx_{fn}$\;
    }
  }
  \nl \# Rescale all the variables according to \eref{rescaling} \;
}
\caption{Joint separation of sources and sound power}
\label{alg:blinkiva}
\end{algorithm}

\section{Numerical Experiments}
\seclabel{numexp}

\begin{figure*}[h]
  \centering
  \subfloat{\flabel{perfeval:sdr} \includegraphics[width=90mm]{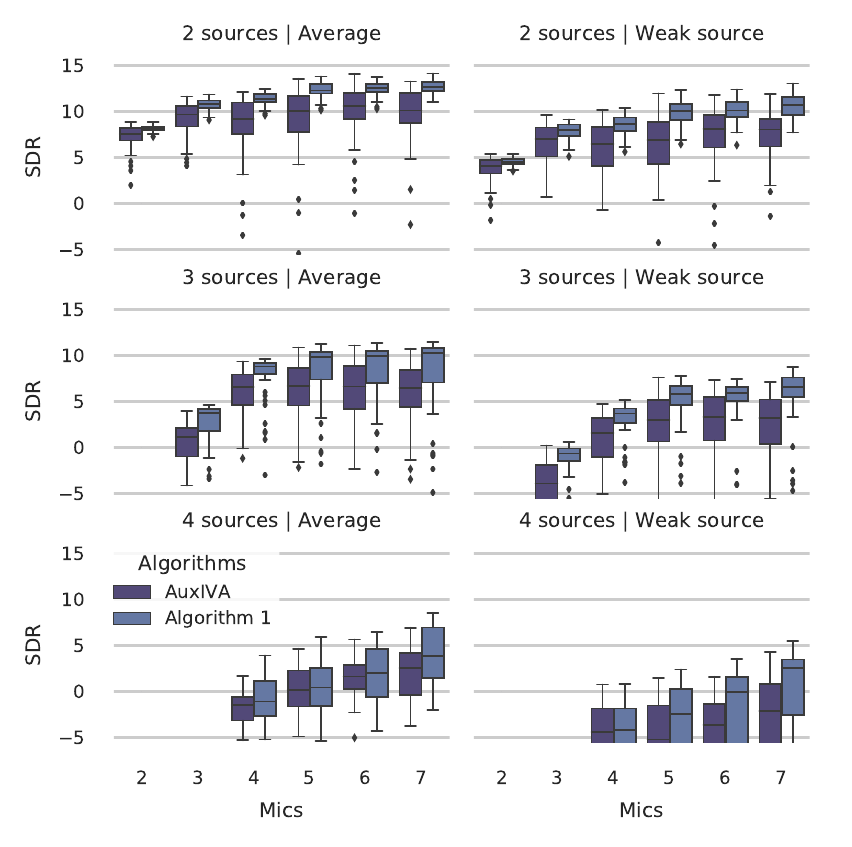}}
  %\hfill
  \subfloat{\flabel{perfeval:sir} \includegraphics[width=90mm]{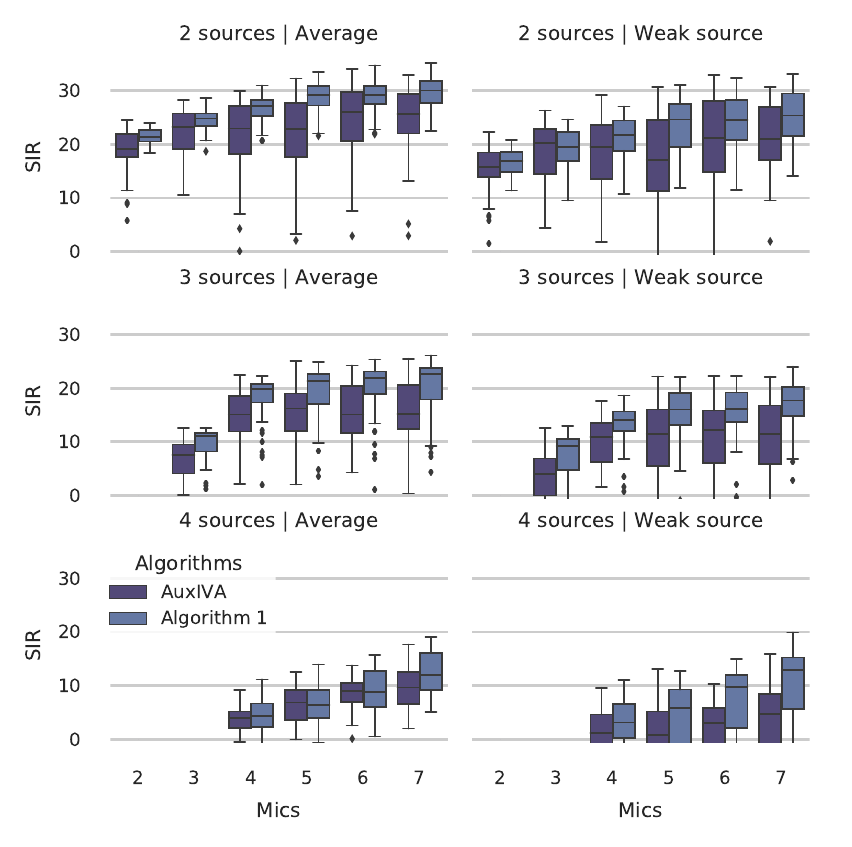}}
  \caption{
    Box-plots of signal-to-distortion ratio (SDR, left) and signal-to-interference ratio (SIR, right) of the separated signals.
    From top to bottom, the number of sources increases from 2 to 4.
    The number of microphones increases from 2 to 7 on the horizontal axis.
    Odd and even columns show results averaged over all sources and for the weak source only, respectively.
  }
  \flabel{perfeval}
\end{figure*}

% [X] Vary number of sources.
% [X] One of the sources is weak.
% [X] Vary the reverberation time.

In this section, we evaluate and compare the performance of \algref{blinkiva} to that of AuxIVA~\cite{Ono:2011tn} via numerical experiments.

\subsection{Setup}

We simulate a \SI{10}{\meter}$\times$\SI{7.5}{\meter}$\times$\SI{3}{\meter} room with reverberation time of \SI{300}{\milli\second} using the image source method~\cite{Allen:1979cn} implemented in \texttt{pyroomacoustics} Python package~\cite{scheibler2017pyroomacoustics}.
We place a circular microphone array of radius \SI{2}{\centi\meter} at $[4.1, 3.76, 1.2]$.
The number of microphones is varied from 2 to 7.
Forty blinkies are placed on an approximate $4\times 10$ equispaced grid filling the \SI{3}{\meter}$\times$\SI{5.5}{\meter} rectangular area with lower left corner at $[1,1]$.
Their height is \SI{0.7}{\meter}.
Between 2 and 4 target sources are placed equispaced on an arc of \SI{120}{\degree} of radius \SI{2}{\meter} centered at the microphone array.
They are placed at a height of \SI{1.2}{\meter} and such that they fall within the area covered by the grid of blinkies.
Diffuse noise is created by placing 10 additional sources on the opposite side of the target sources with respect to the microphone array.
The setup is illustrated in \ffref{experiment_setup}.

After simulating propagation, the variances of target sources are fixed to $\sigma_k^2 = 0.25$ for $k=1$ and $\sigma_k^2 = 1$ for $k \geq 2$.
The signal-to-noise and signal-to-interference-and-noise ratios are defined as
\begin{equation}
  \mathsf{SNR} = \frac{\frac{1}{K} \sum_{k=1}^K \sigma_k^2}{\sigma_n^2},\quad \mathsf{SINR} = \frac{\sum_{k=1}^K \sigma^2_k}{Q \sigma_i^2 + \sigma_n^2},
\end{equation}
where $\sigma_i^2$ and $\sigma_n^2$ are the variances of the $Q$ interfering sources and uncorrelated white noise, respectively.
We set them so that $\mathsf{SNR}=60$ dB and $\mathsf{SINR}=10$ dB.
Speech samples of approximately \SI{20}{\second} are created by concatenating utterances from the CMU Sphinx database \cite{Kominek:2004vf}.
All utterances in a sample are taken from the same speaker.
The experiment is repeated $50$ times for different attributions of speakers and speech samples to source locations.

The simulation is conducted at a sampling frequency of \SI{16}{\kilo\hertz}.
The STFT frame size is 4096 samples with half-overlap and uses a Hann window for analysis and matching synthesis window.
The $B$ blinky signals are simulated by placing extra microphones at their locations.
The blinky microphone signals are fed to the STFT and their power is summed over frequencies before processing\footnote{In practice, the blinky signals acquired via LEDs and a camera need to be calibrated and resampled at the STFT frame rate.}.
Finally, \algref{blinkiva} is compared to AuxIVA~\cite{Ono:2011tn} as implemented in \texttt{pyroomacoustics}~\cite{scheibler2017pyroomacoustics}.
Both algorithms are run for 100 iterations, and the number of NMF sub-iterations of \algref{blinkiva} is 20.
The scale of the separated signals is restored by projection back on the first microphone~\cite{Murata:2001gb}.

\subsection{Results}

We evaluate the separated signals in terms of signal-to-distortion ratio (SDR) and signal-to-interference ratio (SIR) as defined in~\cite{Vincent:2006fz}.
These metrics are computed using the \texttt{mir\_eval} toolbox~\cite{Raffel:2014uu}.
While \algref{blinkiva} provides automatic selection of the separated signals when $K < M$, this is not the case for AuxIVA.
As a work-around, we select the $K$ signals with the largest power for comparison.

The distribution of SDR and SIR of the separated signals is illustrated with box-plots in~\ffref{perfeval}.
Both the distribution averaged over all sources and for the weak source only are showed.
Overall, the joint formulation improves over AuxIVA in terms of both SDR and SIR improvements in all cases.
For two and three sources, while the performance of AuxIVA is very signal dependent, with dips as low as 0 dB in terms of SIR, the proposed method gives consistent performance around or above 20 dB.
Even for the weak source, the proposed method in many cases has a 25-th percentile higher than the 75-th percentile of AuxIVA.
With four sources, both methods have similar average performance when up to 6 microphones are used, with \algref{blinkiva} outperforming AuxIVA for 7 microphones.
In addition, we observe that only the proposed method successfully exploits extra microphones to extract the weak source.
With 7 microphones the SIR are 4.7 and 12.9 dB for AuxIVA and \algref{blinkiva}, respectively.

%\vfill
\section{Conclusion}
\seclabel{conclusion}

In this work, we showed that using sound power sensors, e.g. blinkies, together with a conventional microphone array significantly boosts the performance of blind source separation.
Because the blinkies can be distributed over a larger area, they can provide reliable source activation information.
We formulated a joint probabilistic model of the microphone and power sensor measurements and used it to derive an efficient algorithm for the blind extraction of sources from the microphone signals.
We showed through numerical experiments that including the power data effectively regularizes the source separation.
Performance of the proposed method increases steadily with the number of microphones, unlike conventional IVA that suffers in some cases from frequency permutation ambiguity.
In addition, the proposed method is able to recover a source with just a quarter the power of three competing sources, whereas conventional IVA fails to do so.

A key question that has yet to be answered is that of the influence of the placement of blinkies with respect to sound sources.
Indeed, we would like to determine the minimum density and under what conditions the joint separation performs best.
Finally, the proposed algorithm should be tested in real conditions.

%\pagebreak

%\clearpage

\bibliographystyle{IEEEtran}
\bibliography{refs}

\end{document}